\definecolor{colorref}{rgb}{0.4648,0,0} 
\definecolor{colorcite}{rgb}{0,0.2902,0.1765}
\newtheorem{proposition}{Proposition}
\newtheorem{Cor}{Corollary}
\newcommand{\dl}{\mathrm{d}}
\newcommand{\ul}{\mathrm{u}}
\newcommand{\signal}[1]{{\boldsymbol{#1}}}
\newcommand{\real}{{\mathbb R}}
\newcommand{\innerprod}[2]{\left\langle{#1},{#2}\right\rangle}
\newtheorem{fact}{Fact}
\newcommand{\refeq}[1]{(\ref{#1})}
\newcommand{\dprime}{{\prime\prime}}
\begin{document}

\title{Error Bounds for FDD Massive MIMO Channel Covariance Conversion with Set-Theoretic Methods}

\author{\IEEEauthorblockN{Renato L. G. Cavalcante,$\null^\dagger$ L. Miretti,$\null^\ddagger$ S. Sta\'nczak$\null^\dagger$\\ 
		$\null^\dagger$ Fraunhofer Heinrich Hertz Institute and Technical University of Berlin, Germany \\ 
	$\null^\ddagger$ EURECOM, France	}
}

\maketitle

\begin{abstract}
We derive novel bounds for the performance of algorithms that estimate the downlink covariance matrix from the uplink covariance matrix in frequency division duplex (FDD) massive multiple-input multiple-output (MIMO) systems. The focus is on algorithms that use estimates of the angular power spectrum as an intermediate step. Unlike previous results, the proposed bounds follow from simple arguments in possibly infinite dimensional Hilbert spaces, and they do not require strong assumptions on the array geometry or on the propagation model. Furthermore, they are suitable for the analysis of set-theoretic methods that can efficiently incorporate side information about the angular power spectrum. This last feature enables us to derive simple techniques to enhance set-theoretic methods without any heuristic arguments. In particular, we show that the performance of a simple algorithm that requires only a simple matrix-vector multiplication cannot be improved significantly in some practical scenarios, especially if coarse information about the support of the angular power spectrum is available.
\end{abstract}
\section{Introduction}
\label{sec:intro}

Owing to the lack of channel reciprocity in the frequency division duplex (FDD) mode, most proposals for FDD massive multiple-input multiple-output (MIMO) systems  
envision some sort of feedback from the user equipment for downlink channel state acquisition at the base stations \cite{ad2013,dec2015,miretti18,miretti18SPAWC,hag2018multi,xie2016overview,dai2018}. If traditional channel feedback mechanisms are employed, and the number of antennas is  large, downlink channel estimation in FDD massive MIMO systems may incur a prohibitive feedback overhead. Therefore, approaches for reducing this overhead have received a great deal of attention in recent years. Many promising approaches for reducing this overhead exploit information about the downlink covariance matrix \cite{ad2013,xie2016overview}, which is estimated from the uplink covariance matrix \cite{dec2015,miretti18,miretti18SPAWC,hag2018multi}. In this study, we address this estimation problem, hereafter called the uplink-downlink conversion problem.

Although channel reciprocity is lost in FDD massive MIMO systems, estimating the downlink covariance matrix from the uplink covariance matrix is possible by exploiting a different form of reciprocity, the so-called reciprocity of the angular power spectrum \cite{miretti18,miretti18SPAWC,xie2016overview,hag2018multi}. The basic assumption behind this form of reciprocity is that the average receive/transmit power at a unit of angle (hereafter called {\it angular power spectrum}) at an antenna array is frequency invariant because, if the frequency separation is not too large, the scattering environments are the same for both the uplink and downlink channels. Building upon this characteristic of wireless channels, researchers have proposed to estimate the angular power spectrum from the uplink covariance matrix, with the intent to use this estimate in models of the antenna array to recover the downlink covariance matrix \cite{miretti18,miretti18SPAWC,hag2018multi} .

In particular,  the algorithms in \cite{miretti18,miretti18SPAWC} estimate the angular power spectrum with set-theoretic methods that can easily include side information expressed in terms of closed convex sets in a Hilbert space.  Despite working in possibly infinite dimensional spaces, one of the approaches in \cite{miretti18,miretti18SPAWC} have shown that good uplink-downlink conversion performance can be obtained with a very simple matrix-vector multiplication. In that scheme with remarkably low computational complexity, the matrix is computed only once for the entire system lifetime, and the vector is constructed by rearranging the components of the uplink covariance matrix to be converted. If additional information about the angular power spectrum is used, the studies in \cite{miretti18,miretti18SPAWC} have also shown that the conversion performance of set-theoretic approaches can be further improved with simple fixed point algorithms that do not appeal to finite-dimensional approximations of the physical models. These approaches can easily take into account the polarization of antennas and real-world impairments (e.g., dissimilarities of the antennas in the array), but performance bounds on the conversion performance have not been considered in \cite{miretti18,miretti18SPAWC}.

More recently, by using ideal antenna models in uniform linear arrays, the study in \cite{hag2018multi} has proved that, in uplink-downlink channel covariance conversion based on algorithms that first estimate the angular power spectrum (such as those in \cite{miretti18,miretti18SPAWC}), some of the components of the downlink covariance matrix can be reliably reconstructed. Based on this observation, that study has derived a scheme in which the angular power spectrum is first estimated by using solvers for nonnegative least square problems (this first step can be interpreted as a finite-dimensional approximation of a particular case of \cite[Algorithm~2]{miretti18}). This estimate is then used to reconstruct the downlink covariance matrix, and, based on a formal analysis of the reliability of the reconstruction, the authors of \cite{hag2018multi} have proposed to set to zero the components of the downlink covariance matrix that are not guaranteed to be reliably estimated, in an approach called \emph{truncation}. However, setting to zero these components is a somewhat heuristic approach that can actually decrease the performance of the reconstruction in some scenarios, as the simulations in \cite{hag2018multi} have already shown. Furthermore, the reliability analysis does not seem easy to extend to realistic propagation and antenna models such as those in \cite{miretti18SPAWC}, or to cases where the antenna array response is measured to mitigate modeling errors.

Inspired by the findings in \cite{hag2018multi}, we derive novel bounds on the reconstruction error of each component of the downlink covariance matrix. Unlike previous results, the proposed performance bounds do not assume any particular antenna array or propagation model, so they can be easily applied to the realistic models in \cite{miretti18SPAWC} (without any changes) or to cases where the array response is measured. In addition, if the angular power spectrum is normalized (as assumed in \cite{hag2018multi}), a possible value for the only unknown multiplying constant in the bounds is trivial to determine. The bounds are based on elementary arguments in Hilbert spaces, and they also provide insights to improve set-theoretic algorithms. In particular, if information about the support of the angular power spectrum is used appropriately, we show that the simplest of the algorithms for uplink-downlink conversion in \cite{miretti18,miretti18SPAWC} can be so effective that performance improvements obtained with more computationally demanding conversion mechanisms are marginal at best (NOTE: the proposed bounds can also be used in the analysis of these complex mechanisms). This result is particularly appealing because that simple algorithm has no parameters to be tuned, and all steps of the enhancements we propose are justified by rigorous arguments.

This study is structured as follows. In Sect.~\ref{sect.preliminaries} we introduce the main mathematical concepts, and we prove a simple result (Proposition~\ref{prop.general_bound}) that is the main mathematical tool used to derive the novel bounds. The general result in Sect.~\ref{sect.preliminaries} is specialized to the problem of uplink-downlink covariance matrix conversion in Sect.~\ref{sect.error_bounds}, which also discusses how to exploit the proposed bounds to enhance set-theoretic methods (see Sect.~\ref{sect.side_info}). To keep the presentation as general as possible, we do not assume any particular antenna array or propagation model in Sect.~\ref{sect.error_bounds}. A concrete application of the theory developed here is shown in Sect.~\ref{sect.ULA}.

\section{Mathematical preliminaries}
\label{sect.preliminaries}
In the following, we use $\real_+$ to indicate nonnegative reals. The (coordinate-wise) real and imaginary components of complex vectors or matrices are given by, respectively, $\mathrm{Real}({\cdot})$ and $\mathrm{Imag}(\cdot)$, and $i$ is the imaginary unit, which is the solution to $i^2=-1$. By $(\cdot)^T$ we denote the transpose of a matrix or vector. Given a matrix $\signal{M}\in\real^{Q\times N}$, $\mathrm{vec}(\signal{M})\in\real^{QN}$ is the vector obtained by stacking the columns of $\signal{M}$, and $\signal{M}^\dagger$ is the Moore-Penrose (pseudo-)inverse of $\signal{M}$. We denote by $\mathcal{H}$ a real Hilbert space with inner product $\innerprod{\cdot}{\cdot}$ and norm $\|\cdot\|=\sqrt{\innerprod{\cdot}{\cdot}}$. Given ${x}\in\mathcal{H}$ and a set $C\subset\mathcal{H}$, we define $x+C:=\{h+x \in\mathcal{H}~|~h\in C\}$. A {\it linear variety} $V\subset\mathcal{H}$ is a set that can be expressed as $V=x+M$ for a vector  $x\in\mathcal{H}$ and a subspace $M\subset\mathcal{H}$; i.e., $V$ is a translation of the subspace $M$. If $C\subset \mathcal{H}$ is a nonempty closed convex set, the projection $P_C:\mathcal{H}\to~C$ maps a vector $x\in \mathcal{H}$ to the uniquely existing vector $y_0\in C$ satisfying $(\forall y\in C)~\|x-y_0\|\le\|x-y\|$.  The orthogonal complement of a subset $C\subset \mathcal{H}$ is the closed subspace given by $C^\perp:=\{y\in\mathcal{H}~|~(\forall x\in C)~\innerprod{x}{y}=0\}$, and note that $M\cup M^\perp=\mathcal{H}$ and $(M^\perp)^\perp = M$ for any closed subspace $M$. The closure of a set $C\subset\mathcal{H}$ is denoted by $\overline{C}$. A set $S=\{x_1,\ldots,x_N\}\subset\mathcal{H}$ is called {\it linear independent} (respectively, dependent) if the vectors $x_1,\ldots,x_N$ are linearly independent (respectively, dependent).  Given a function $f:\Omega\to \real$ with  $\Omega\subset\real^N$, we define its support to be the set  $\mathrm{Supp}(f)=\{x\in\Omega~|~f(x)\neq 0\}$. By $L^2(\Omega)$, $\Omega\subset \real^N$, we denote the space of real-valued square-integrable functions $f:\Omega\to\real$  with respect to the standard Lebesgue measure.

 Below is a summary of standard results in convex analysis that we use throughout this study. The proof can be found in most standard references on convex and functional analysis (e.g., \cite{yukawa2010,luen}).

\begin{fact} 
	\label{fact.basic}
	\begin{itemize}
		\item[(i)] Let $M\subset \mathcal{H}$ be a closed (linear) subspace. Then  $(\forall x\in\mathcal{H})~ x=P_M(x)+P_{M^\perp} (x)$.
		Furthermore, the projection $P_M:\mathcal{H}\to M$ onto $M$ is a bounded linear operator with operator norm given by $\|P_M\|_\mathrm{o}:=\sup_{\|x\|=1}{\|P_M(x)\|}\le 1$, and the equality is achieved if $M\neq\{0\}$.				
		\item[(ii)] Let $M\subset\mathcal{H}$ be a closed subspace. For a given $u\in\mathcal{H}$, consider the closed linear variety $V=u+M$. Then $(\forall x\in V)~V=x+M$ and $x=P_V(0)+P_M(x)$.
		\item[(iii)] Let $V=\cap_{k=1}^K\{x\in\mathcal{H}~|~\innerprod{x}{v_k}=b_k\}\neq\emptyset$, where $(v_k,~b_k)\in\mathcal{H}\times\real$ for each $k\in\{1,\ldots,K\}$. Then we have $(\forall u\in V)~V=u+{M}^\perp$, where $M=\mathrm{span}\{v_1,\ldots, v_K\}$.

	\end{itemize} 
\end{fact}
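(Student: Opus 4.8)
The plan is to reduce all three items to a single underlying tool, the projection theorem together with its variational characterization, which I would take as given: for a nonempty closed convex $C\subset\mathcal{H}$ the projection $P_C(x)$ exists, is unique, and satisfies $\langle x-P_C(x),\,c-P_C(x)\rangle\le 0$ for every $c\in C$. The one observation I would record before anything else is that when $C=M$ is a closed subspace this one-sided inequality collapses to a two-sided orthogonality relation: since $c-P_M(x)$ ranges over all of $M$ and $M$ is closed under scalar multiplication, replacing $c$ by $\pm t\,c$ and letting $t$ vary forces $\langle x-P_M(x),\,m\rangle=0$ for every $m\in M$, i.e.\ $x-P_M(x)\in M^\perp$.

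For part (i), this orthogonality relation immediately identifies $P_{M^\perp}(x)=x-P_M(x)$, because $x-P_M(x)\in M^\perp$ while $x-(x-P_M(x))=P_M(x)\in M=(M^\perp)^\perp$, which is exactly the characterization of the projection onto $M^\perp$; hence $x=P_M(x)+P_{M^\perp}(x)$. Linearity of $P_M$ then follows by checking the characterization on a linear combination: $\alpha P_M(x_1)+\beta P_M(x_2)\in M$ and $(\alpha x_1+\beta x_2)-(\alpha P_M(x_1)+\beta P_M(x_2))\in M^\perp$ because $M^\perp$ is a subspace, so uniqueness of the projection forces equality. The norm bound is just the Pythagorean identity $\|x\|^2=\|P_M(x)\|^2+\|P_{M^\perp}(x)\|^2\ge\|P_M(x)\|^2$, giving $\|P_M\|_{\mathrm o}\le 1$; and if $M\neq\{0\}$ any unit vector $m\in M$ satisfies $P_M(m)=m$, so the bound is attained.

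For part (ii), I would write $x=u+m_0$ with $m_0\in M$; since $m_0+M=M$ we get $x+M=u+M=V$, proving $V=x+M$ for every $x\in V$. For the decomposition, I would first show $P_V(0)\in M^\perp$: applying the variational inequality for $P_V(0)$ against the points $v=P_V(0)+m\in V$ and again using $\pm m$ yields $\langle P_V(0),\,m\rangle=0$ for all $m\in M$. Then for any $x\in V$ write $x=P_V(0)+m$ with $m\in M$; since $P_V(0)\in M^\perp$ and $m\in M$, uniqueness of the orthogonal decomposition from part (i) identifies $m=P_M(x)$, giving $x=P_V(0)+P_M(x)$. Part (iii) is the most direct: for $x\in V$ and the fixed $u\in V$ we have $\langle x-u,\,v_k\rangle=b_k-b_k=0$ for every $k$, so $x-u$ is orthogonal to each $v_k$ and hence to $M=\mathrm{span}\{v_1,\dots,v_K\}$, giving $V\subseteq u+M^\perp$; conversely any $x=u+w$ with $w\in M^\perp$ satisfies $\langle x,\,v_k\rangle=\langle u,\,v_k\rangle=b_k$, so $x\in V$, and the two inclusions give $V=u+M^\perp$.

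The only genuinely delicate step — the one I would flag as the main obstacle — is the passage from the one-sided variational inequality characterizing the projection onto a convex set to the two-sided orthogonality relation valid for a subspace, and likewise for the affine variety $V$ in part (ii); once that is in place, everything else is bookkeeping with the direct-sum decomposition $\mathcal{H}=M\oplus M^\perp$ and the uniqueness of projections. I would also keep in mind that $M=\mathrm{span}\{v_1,\dots,v_K\}$ is finite-dimensional and therefore closed, so $M^\perp$ is closed and the projection machinery applies without any subtlety even in the infinite-dimensional setting.
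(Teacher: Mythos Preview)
Your argument is correct in all three parts: the reduction of the convex variational inequality to an orthogonality relation on subspaces is carried out properly, the decomposition $x=P_V(0)+P_M(x)$ in (ii) is justified via $P_V(0)\in M^\perp$ and uniqueness of the direct-sum decomposition, and the double inclusion in (iii) is clean.

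There is, however, nothing to compare against: the paper does not prove Fact~\ref{fact.basic}. It is stated as a collection of standard results, with the proof deferred to references on convex and functional analysis. So you have supplied a full, self-contained proof where the paper deliberately omits one; your approach via the variational characterization of the metric projection is a natural way to make the statement self-contained, and it is consistent with what one finds in the cited textbooks.
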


In the application described in the next section, we study the performance of algorithms producing an estimate $\widetilde{\rho}\in V$ of $\rho\in V$, where $V$ is a closed linear variety generated by the translation of the orthogonal complement $M^\perp$ of a finite  dimensional subspace $M$ (NOTE: the closed subspace $M^\perp$ can be infinite dimensional). In that application, we are not directly interested in the error $\|\widetilde{\rho}-\rho\|$, but in the approximation of $\innerprod{\rho}{y}$ by $\innerprod{\widetilde{\rho}}{y}$ for a given  $y\in\mathcal{H}$. The absolute error of the approximation is given by $e:=\left|\innerprod{\widetilde{\rho}-\rho}{y}\right|$, and we show in Proposition~\ref{prop.general_bound} elementary bounds for $e$ that decouples into the product of two terms. The important aspect to note is that one of these terms only depends on  the choice of the algorithm, but not on the estimand $\rho$. 

\begin{proposition}
\label{prop.general_bound} Let $M\subset\mathcal{H}$ be a closed subspace, and consider the linear variety $V=\rho+M^\perp$ for a given $\rho\in\mathcal{H}$. Suppose that an algorithm produces an estimate $\widetilde{\rho}\in {V}$ of $\rho\in {V}$. Then each of the following holds:
\begin{enumerate}
	\item[(i)] $(\forall y\in \mathcal{H})$ 
	\begin{multline}
		\label{eq.decoupled_bound}
		\left|\innerprod{\widetilde{\rho}-\rho}{y}\right|\le \|\widetilde{\rho} - \rho\|~ \|y-P_{M}(y)\| \\ = \|P_{M^\perp}(\widetilde{\rho}) - P_{M^\perp}(\rho)\|~ \|y-P_{M}(y)\|  
	\end{multline}
	\item[(ii)] If $\widetilde{\rho}=P_V(0)$, then 
	\begin{multline}
		\label{eq.bound_algo1}
		(\forall y\in \mathcal{H})~\left|\innerprod{\widetilde{\rho}-\rho}{y}\right|\le \|\rho - P_{M}(\rho)\|~ \|y-P_{M}(y)\| 
	\end{multline}
\end{enumerate}
\end{proposition}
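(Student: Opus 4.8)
The plan is to reduce both parts to a single structural observation—that the difference of any two points of the variety $V=\rho+M^\perp$ must itself lie in $M^\perp$—after which everything follows from the orthogonal decomposition in Fact~\ref{fact.basic}(i) together with the Cauchy--Schwarz inequality.

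For part (i), I would first note that since $\widetilde{\rho}\in V=\rho+M^\perp$ we may write $\widetilde{\rho}=\rho+h$ for some $h\in M^\perp$, so that $\widetilde{\rho}-\rho=h\in M^\perp$. Next I would decompose an arbitrary $y\in\mathcal{H}$ as $y=P_M(y)+P_{M^\perp}(y)$ via Fact~\ref{fact.basic}(i). Because $\widetilde{\rho}-\rho\in M^\perp$ is orthogonal to $P_M(y)\in M$, the $M$-component drops out of the inner product, yielding $\innerprod{\widetilde{\rho}-\rho}{y}=\innerprod{\widetilde{\rho}-\rho}{y-P_M(y)}$; applying Cauchy--Schwarz then gives the stated inequality. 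The displayed equality is immediate from the fact that $\widetilde{\rho}-\rho\in M^\perp$ forces $P_{M^\perp}(\widetilde{\rho}-\rho)=\widetilde{\rho}-\rho$, combined with linearity of the projection $P_{M^\perp}$.

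For part (ii), the only additional ingredient is to evaluate $\|\widetilde{\rho}-\rho\|$ in the special case $\widetilde{\rho}=P_V(0)$. I would apply Fact~\ref{fact.basic}(ii) to the variety $V=\rho+M^\perp$, reading the generating closed subspace in that fact as $M^\perp$ rather than $M$; taking $x=\rho\in V$ gives $\rho=P_V(0)+P_{M^\perp}(\rho)=\widetilde{\rho}+P_{M^\perp}(\rho)$, hence $\widetilde{\rho}-\rho=-P_{M^\perp}(\rho)$. Therefore $\|\widetilde{\rho}-\rho\|=\|P_{M^\perp}(\rho)\|=\|\rho-P_M(\rho)\|$, where the last equality again invokes the decomposition in Fact~\ref{fact.basic}(i). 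Substituting this expression into the bound from part (i) delivers~\eqref{eq.bound_algo1}.

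The argument is short, and the only genuine pitfall is bookkeeping: one must consistently track whether a given step uses the subspace $M$ or its orthogonal complement $M^\perp$, since the variety $V$ is a translate of $M^\perp$ while the algorithm-independent factor $\|y-P_M(y)\|$ is measured against $M$. In particular, when invoking Fact~\ref{fact.basic}(ii) it is the subspace $M^\perp$---not $M$---that plays the role of the generating subspace, and overlooking this swap is the easiest way to arrive at an incorrect expression for $\|\widetilde{\rho}-\rho\|$.
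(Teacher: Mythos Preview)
Your proposal is correct and follows essentially the same route as the paper: both proofs hinge on the observation that $\widetilde{\rho}-\rho\in M^\perp$, decompose $y$ via Fact~\ref{fact.basic}(i), and apply Cauchy--Schwarz, and for part (ii) both invoke Fact~\ref{fact.basic}(ii) (with $M^\perp$ as the generating subspace) to identify $\widetilde{\rho}-\rho$ with $-P_{M^\perp}(\rho)$. Your derivation of $\widetilde{\rho}-\rho\in M^\perp$ directly from the definition of $V$ is marginally more streamlined than the paper's explicit use of the decomposition $x=P_V(0)+P_{M^\perp}(x)$, but the arguments are otherwise identical.
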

\begin{proof}
	(i) Let $y\in\mathcal{H}$ be arbitrary. By assumption, both $\widetilde{\rho}$ and $\rho$ are elements of the linear variety $V$, so we have 
	\begin{multline}
	\label{eq.error_x}
	\widetilde{\rho}-\rho = P_{M^\perp}(\widetilde{\rho})+P_V(0)-P_{M^\perp}(\rho)-P_V(0) \\ = P_{M^\perp}(\widetilde{\rho})-P_{M^\perp}(\rho)= P_{M^\perp}(\widetilde{\rho}-\rho)
	\end{multline}
	 by Fact~\ref{fact.basic}(i)-(ii).  From $y=P_{M}(y)+P_{M^\perp}(y)$  (Fact~\ref{fact.basic}(i)) and the definition of orthogonal complements, we obtain 
	\begin{multline*}
	\left|\innerprod{\widetilde{\rho}-\rho}{y}\right| = \left|\innerprod{P_{M^\perp}(\widetilde{\rho}-\rho)}{P_{M}(y)+P_{M^\perp}(y)}\right| \\ = \left|\innerprod{P_{M^\perp}(\widetilde{\rho}-\rho)}{P_{M^\perp}(y)}\right|. 
	\end{multline*}
	A direct application of the Cauchy-Schwartz inequality yields $\left|\innerprod{\widetilde{\rho}-\rho}{y}\right| \le \|P_{M^\perp}(\widetilde{\rho}-\rho)\|~ \|P_{M^\perp}(y)\|$. The result in \refeq{eq.decoupled_bound} now follows from 
$P_{M^\perp}(y) = y - P_{M}(y)$ (Fact~\ref{fact.basic}(i)) and \refeq{eq.error_x}.

	(ii) By Fact~\ref{fact.basic}(i)-(ii) and $\widetilde{\rho}=P_V(0)$, we deduce $P_{M^\perp}(\widetilde{\rho}) = 0$. Now use the equality $\rho=P_{M^\perp}(\rho)+P_{M}(\rho)$ (Fact~\ref{fact.basic}(i)) in \refeq{eq.decoupled_bound} to obtain the desired result.
\end{proof}

 Proposition~\ref{prop.general_bound} has a very natural interpretation. If the estimation error $\|\widetilde{\rho}-\rho\|$ of $\rho\in V$ is bounded, then \refeq{eq.decoupled_bound} shows that the error $e:=\left|\innerprod{\widetilde{\rho}-\rho}{y}\right|$ is small if there exists a vector $u\in M$ that is sufficiently close to $y\in\mathcal{H}$ with respect to the metric $d(u,y)=\|u-y\|$. In this case, the choice of the algorithm used to produce the estimate $\widetilde{\rho}\in V$ does not play a decisive role in the minimization of the error $e$. Proposition~\ref{prop.general_bound}(ii) shows the guaranteed performance bound of a simple algorithm for estimating $\rho$. For this scheme, the error $e$ is small if $\rho$ or $y$, or both, can be well approximated by vectors in the subspace $M$.

\section{Error bounds for uplink-downlink conversion in FDD MIMO systems}
\label{sect.error_bounds}

In this section, we apply the results of Sect.~\ref{sect.preliminaries} to the problem of  covariance matrix conversion in FDD massive MIMO systems, which, as mentioned in the introduction, we call the uplink-downlink conversion problem. We first describe the problem in Sect.~\ref{sect.system_model}, and then we proceed to tailor the bounds in Proposition~\ref{prop.general_bound} to our particular application in Sect.~\ref{sect.bounds_array}. In Sect.~\ref{sect.side_info}, we show how the  bounds can be used to improve the approaches in \cite{miretti18,miretti18SPAWC}. To keep the discussion as general as possible, we do not assume any particular array geometry or propagation model in this section. 

\subsection{The uplink-downlink conversion problem}
\label{sect.system_model}
  We consider a single-cell flat-fading wireless system, in which a base station equipped with N antennas exchanges data with a single-antenna user. In the uplink, the base station first estimates the uplink channel covariance matrix $\signal{R}_\ul = E[\signal{h}_\ul \signal{h}_\ul^H] \in \mathbb{C}^{N\times N}$ from   samples of the uplink channel $\signal{h}_\ul$ and any prior knowledge of this covariance matrix. In the uplink-downlink conversion problem, samples  of the downlink channel $\signal{h}_\dl$ are not available at the base station, and the objective is to obtain an estimate of the downlink channel covariance matrix $\signal{R}_\dl = E[\signal{h}_\dl \signal{h}_\dl^H]\in \mathbb{C}^{N\times N}$ directly from the estimate of $\signal{R}_\ul$. The main challenge for the conversion in FDD MIMO systems is the lack of channel reciprocity. The uplink and downlink channels use different frequencies, so their statistics are also different, which in turn implies that $\signal{R}_\ul\neq\signal{R}_\dl$. However, $\signal{R}_\dl$ and $\signal{R}_\ul$ are related. In particular, estimating $\signal{R}_\dl$ from $\signal{R}_\ul$ is possible by using the so-called reciprocity of the angular power spectrum, which we now formally describe.
  
  For typical frequency separation gaps, the real and imaginary parts of each component of $\signal{R}_\ul$ and $\signal{R}_\dl$ can be seen as the result of an inner product in an infinite dimensional real Hilbert space $(\mathcal{H},\innerprod{\cdot}{\cdot})$, with the vectors (functions) and the inner product  taking a particular form that depends on system parameters such as the antenna polarization, array geometry, and the propagation model, among others \cite{miretti18,miretti18SPAWC}.  More precisely, let $r_{\ul, k}\in\real$ and $r_{\dl, k}\in\real$ denote one of the $k\in\mathcal{I}:=\{1,\ldots, 2N^2\}$ components of, respectively, $[\mathrm{Real}(\signal{R}_\ul)~\mathrm{Imag}(\signal{R}_\ul)]\in \real^{N\times 2N}$ and $[\mathrm{Real}(\signal{R}_\dl)~\mathrm{Imag}(\signal{R}_\dl)]\in \real^{N\times 2N}$. It has been shown in \cite{miretti18,miretti18SPAWC} that, for each $k\in\mathcal{I}$ and for a given inner product $\innerprod{\cdot}{\cdot}$ that depends on the system model,  we have 
  \begin{align}
  \label{eq.upr}
  r_{\ul, k}=\innerprod{\rho}{g_{\ul, k}}
  \end{align}
  and 
  \begin{align}
  \label{eq.dlr}
  r_{\dl, k}=\innerprod{\rho}{g_{\dl, k}},
  \end{align}
   where $\rho\in\mathcal{H}$ is the unknown frequency independent function called {\it angular power spectrum}, and $g_{\ul, k}\in\mathcal{H}$ and $g_{\dl, k}\in\mathcal{H}$ are known uplink and downlink functions related to the antenna array responses (see Sect.~\ref{sect.ULA} for a concrete example). Intuitively, the angular power spectrum is a function that shows the average angular power density that an array receives from the user at a given azimuth (and possibly elevation) angle. In the literature \cite{miretti18,miretti18SPAWC,xie2016overview,hag2018multi}, it is assumed to be the same for both the uplink and downlink channels, which is the phenomenon we call reciprocity of the angular power spectrum.
  
   With the above explanations, we can summarize the set-theoretic approaches in \cite{miretti18,miretti18SPAWC} (and some of the approaches in \cite{hag2018multi}) to the uplink-downlink conversion problem with the following two steps:
  \begin{itemize}
  	\item[(i)] We first obtain an estimate $\widetilde{\rho}\in\mathcal{H}$ of $\rho\in\mathcal{H}$ from the equations $(\forall k\in\mathcal{I})~r_{\ul, k}=\innerprod{\rho}{g_{\ul, k}}$ and possibly known properties of $\rho\in\mathcal{H}$ by using set-theoretic methods.
  	\item[(ii)] With $\widetilde{\rho}\in\mathcal{H}$, we obtain an estimate $\widetilde{r}_{\dl, k}\in\real$ of ${r}_{\dl, k}\in\real$ for each $k\in\mathcal{I}$ by computing $\widetilde{r}_{\dl, k}=\innerprod{\widetilde{\rho}}{g_{\dl, k}}$.
  \end{itemize}
 One of the contributions of the next section is to derive conditions guaranteeing that, given $k\in\mathcal{I}$, the estimate $\widetilde{r}_{\dl,k}$ of ${r}_{\dl,k}$ in step (ii) is accurate even if the estimate $\widetilde{\rho}$ of $\rho$ in step (i) is inaccurate. These conditions will be used to derive simple techniques to improve set-theoretic methods addressing the problem in step (i).

 \subsection{Bounds for the error of UL-DL covariance conversion with general arrays}
 \label{sect.bounds_array}
  We now derive performance bounds for the set-theoretic approaches described above. To this end, we assume that the uplink covariance matrix $\signal{R}_\ul$ (and hence $(r_{\ul, k})_{k\in\mathcal{I}}$) is perfectly estimated. By recalling that covariance matrices have structure (they are at least Hermitian), the number of different equations in \refeq{eq.upr} and \refeq{eq.dlr} is strictly less than $|\mathcal{I}|=2N^2$ ($|\mathcal{I}|$ denotes the cardinality of $\mathcal{I}$). Therefore, many repeating equations can be removed, but for brevity this simple operation is not considered in this section. 

To proceed with the bounds, we define  
\begin{align}
\label{eq.setSp}
S^\prime=\{g_{\ul,1},\ldots, g_{\ul,|\mathcal{I}|}\}\subset\mathcal{H}
\end{align}
to be the set corresponding to the uplink functions in \refeq{eq.upr}. The angular power spectrum $\rho\in\mathcal{H}$ is related to $S^\prime$ by the fact that, from \refeq{eq.upr} and Fact.~\ref{fact.basic}(ii)-(iii), we have  $\rho\in V^\prime:=\cap_{k\in\mathcal{I}}\{x \in\mathcal{H}~|~\innerprod{x}{g_{\ul, k}} = r_{\ul, k}\}=\rho+\mathrm{span}(S^\prime)^\perp$. Intuitively, the linear variety $V^\prime\subset\mathcal{H}$ is the set containing all angular power spectrum functions that produce the same uplink covariance matrix.

To include in the analysis any prior information about $\rho\in\mathcal{H}$ expressed in terms of closed linear varieties or closed subspaces, we assume that
\begin{align}
\label{eq.Vdprime}
\rho\in V^\dprime:=\cap_{k=1}^Q\{x\in\mathcal{H}~|~\innerprod{x}{v_k}=b_k\} 
\end{align}
and that the tuples $\{(v_k, b_k)\}_{k=1,\ldots,Q}\subset \mathcal{H}\times \real$ used to construct the linear variety $V^\dprime$ ($V^\dprime$ is a subspace if $b_1=\ldots=b_Q=0$) are  known. We now define $V:=V^\prime\cap V^\dprime$ and construct a new set $S\subset\mathcal{H}$ containing all vectors in $S^\prime$ in \refeq{eq.setSp} and all vectors in $S^\dprime:= \{v_{1},\ldots, v_{Q}\}$; i.e.,  
\begin{align}
\label{eq.setS}
S:=S^\prime\cup S^\dprime \subset\mathcal{H}.
\end{align}

By recalling that a nonempty intersection of closed linear varieties is a closed linear variety, the set $V=V^\prime \cap V^\dprime\ni\rho$ defined above is a closed linear variety $V$ that can be equivalently written as $V=\rho+M^\perp$, where $M\subset\mathcal{H}$ is the closed subspace $M:=\mathrm{span}(S)$. With these operations, we are now exactly in the setting of Proposition~\ref{prop.general_bound}. Before we proceed with the specialization of this proposition to the problem of uplink-downlink conversion, we first show that the projections $P_M:\mathcal{H}\to M$ and $P_V(0)\in\mathcal{H}$ are easy to compute.

To simplify the notation, denote by $x_1, \ldots, x_{L}$ the $L=|\mathcal{I}|+Q$ vectors in the set $S$ in \refeq{eq.setS}. Without any loss of generality, we assume that $x_k=g_{\ul,k}$ for $k\in\{1,\ldots,|\mathcal{I}|\}$ and $x_{k}=v_{k-|\mathcal{I}|}$ for $k\in\{|\mathcal{I}|+1,\ldots,|\mathcal{I}|+Q\}$. Define the following matrix:

\begin{align}
\label{eq.matrixg}
\signal{G} = \left[\begin{matrix}\innerprod{x_1}{x_1}&\cdots&\innerprod{x_1}{x_{L}}\\
\vdots&\ddots&\vdots \\
\innerprod{x_{L}}{x_1} & \cdots &  \innerprod{x_{L}}{x_{L}}
\end{matrix}\right] \in\real^{L \times L}.
\end{align}

With the above definitions, we can use arguments similar to those in \cite[Ch. 3]{luen}\footnote{Here we do not assume $S$ to be a linearly independent set.} to show that the projection from $y\in\mathcal{H}$ onto the closed subspace $M$ is given by:
\begin{align}
\label{eq.projm}
P_M:\mathcal{H}\to M:y\mapsto \sum_{k=1}^{L}\alpha_k x_{k},
\end{align}
where $\signal{\alpha}=[\alpha_1,\ldots,\alpha_{L}]^T \in\real^{L}$ is any solution to $\signal{G}\signal{\alpha}=\signal{z}$, and $\signal{z}=[\innerprod{x_1}{y} \ldots \innerprod{x_{L}}{y}]^T\in\real^{L}.$ In turn, the projection from $0$ onto the linear variety $V$  described above is given by 
\begin{align}
\label{eq.pv0}
P_V(0)=\sum_{k=1}^L\beta_k x_k,
\end{align}
where $\signal{\beta}=[\beta_1,\ldots,\beta_L]^T\in\real^L$ is any solution to $\signal{G}\signal{\beta}=[r_{\ul, 1},\ldots,r_{\ul, |\mathcal{I}|},b_1,\ldots,b_Q]^T\in\real^L$.

As it will soon become clear, in the proposed performance bounds we are specially interested in the estimation error $\|g_{\dl, k}-P_M(g_{\dl, k})\|$ for each $k\in\mathcal{I}$, which can be easily computed as shown in the following standard result. We omit the proof for brevity, but it can be easily obtained by using \refeq{eq.projm}. 

\begin{proposition}
	\label{proposition.error_qdk} Let $P_M(y)\in M=\mathrm{span}(\{x_1,\ldots,x_{L}\})$ be the approximation in the subspace $M$ of an arbitrary vector $y\in\mathcal{H}$. Then the approximation error $\|y-P_M(y)\|$ is given by 
	\begin{align*}
	 \|y-P_M(y)\|=\sqrt{(\|y\|^2 - \signal{z}^T \signal{G}^{\dagger} \signal{z})},
	 \end{align*}
	  where  $\signal{z}=\signal{G}\signal{\alpha}\in\real^L$, $\signal{\alpha}\in\real^L$, and $\signal{G}\in\real^{L\times L}$ are as defined above.
\end{proposition}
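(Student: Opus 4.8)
The plan is to exploit the defining orthogonality of the projection together with the Gram-matrix representation in \refeq{eq.projm}. Since $P_M(y)$ is the orthogonal projection of $y$ onto the closed subspace $M$, the residual $y-P_M(y)$ belongs to $M^\perp$ (by Fact~\ref{fact.basic}(i)) and is therefore orthogonal to $P_M(y)\in M$. The Pythagorean identity then gives $\|y\|^2=\|y-P_M(y)\|^2+\|P_M(y)\|^2$, so that $\|y-P_M(y)\|=\sqrt{\|y\|^2-\|P_M(y)\|^2}$. The entire task thus reduces to showing $\|P_M(y)\|^2=\signal{z}^T\signal{G}^\dagger\signal{z}$, after which the square-root formula in the statement follows immediately.

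Next I would compute the squared norm of the projection directly from \refeq{eq.projm}. Writing $P_M(y)=\sum_{k=1}^{L}\alpha_k x_k$ and using bilinearity of $\innerprod{\cdot}{\cdot}$ together with the definition of $\signal{G}$ in \refeq{eq.matrixg}, one obtains
\begin{align*}
\|P_M(y)\|^2=\sum_{k=1}^{L}\sum_{j=1}^{L}\alpha_k\alpha_j\innerprod{x_k}{x_j}=\signal{\alpha}^T\signal{G}\,\signal{\alpha}.
\end{align*}
Since $\signal{\alpha}$ is any solution to $\signal{G}\signal{\alpha}=\signal{z}$, this already yields the compact intermediate expression $\|P_M(y)\|^2=\signal{\alpha}^T\signal{z}$.

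The main obstacle I anticipate is the final rewriting in terms of the pseudoinverse, precisely because the footnote to \refeq{eq.projm} does not assume $S$ to be linearly independent; hence $\signal{G}$ may be singular and the linear system $\signal{G}\signal{\alpha}=\signal{z}$ may admit many solutions. To handle this cleanly I would first note that the system is consistent, so $\signal{z}\in\mathrm{range}(\signal{G})$, and that $\signal{G}$ is symmetric (indeed positive semidefinite) as a Gram matrix. Then, substituting $\signal{z}=\signal{G}\signal{\alpha}$ and using $\signal{G}^T=\signal{G}$ together with the Moore--Penrose identity $\signal{G}\signal{G}^\dagger\signal{G}=\signal{G}$, I would compute
\begin{align*}
\signal{z}^T\signal{G}^\dagger\signal{z}=\signal{\alpha}^T\signal{G}^T\signal{G}^\dagger\signal{G}\,\signal{\alpha}=\signal{\alpha}^T\signal{G}\,\signal{\alpha}=\signal{\alpha}^T\signal{z}=\|P_M(y)\|^2.
\end{align*}
This chain also shows, as a useful by-product, that $\signal{z}^T\signal{G}^\dagger\signal{z}$ is independent of the particular solution $\signal{\alpha}$ chosen in \refeq{eq.projm}, so the formula is well defined despite the possible non-uniqueness. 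Combining this with the Pythagorean reduction of the first paragraph completes the argument.
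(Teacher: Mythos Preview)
Your argument is correct and is precisely the derivation the paper has in mind: the authors omit the proof entirely, remarking only that it ``can be easily obtained by using \refeq{eq.projm},'' and your Pythagorean reduction together with the Gram-matrix computation and the Moore--Penrose identity $\signal{G}\signal{G}^\dagger\signal{G}=\signal{G}$ is exactly that easy derivation. Your added observation that the formula is well defined regardless of which solution $\signal{\alpha}$ of $\signal{G}\signal{\alpha}=\signal{z}$ is chosen is a nice touch that the paper does not spell out.
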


Now, let
\begin{align}
\label{eq.qmatrix}
\signal{Q} := \left[\begin{matrix}\innerprod{x_1}{g_{\dl,1}}&\cdots&\innerprod{x_1}{g_{\dl,|\mathcal{I}|}}\\
\vdots&\ddots&\vdots \\
\innerprod{x_{L}}{g_{\dl,1}} & \cdots &  \innerprod{x_{L}}{g_{\dl,|\mathcal{I}|}}
\end{matrix}\right] \in\real^{L\times |\mathcal{I}|}.
\end{align}

The proposed error bounds for uplink-downlink conversion are shown in the next corollary. 

\begin{Cor}
	\label{cor.general_bound} Denote by $\signal{q}_k\in\real^{L}$ the $k$th column of the matrix $\signal{Q}$ in \refeq{eq.qmatrix}, and let $\signal{G}$ be as defined in \refeq{eq.matrixg}. Suppose that $\tilde{\rho}\in V$ is an estimate of the angular power spectrum $\rho\in V$ obtained by a given algorithm, where $V=\rho+M^\perp$ and $M\subset\mathcal{H}$ is the closed subspace $M=\mathrm{span}(S)$ with $S$ as defined in \refeq{eq.setS}. Further, assume that $\|\rho\|\le B$ for some $B\in\real$. Let $\innerprod{\widetilde{\rho}}{g_{\dl, k}}=\widetilde{r}_{\dl, k}$ be the estimate of the $k$th ($k\in\mathcal{I}$) component $r_{\dl, k}$ of the downlink covariance matrix $\signal{R}_\dl$. Then the estimation error $e_k := |\widetilde{r}_{\dl, k}-r_{\dl, k}|$ for each $k\in\mathcal{I}$ satisfies the following: 
	\begin{itemize}
		\item[(i)] If the algorithm used to produce the estimate $\widetilde{\rho}\in V$ also guarantees $\|\widetilde{\rho}\|\le B$, then 
		\begin{multline*}
			(\forall k\in\mathcal{I})~ e_k \le \|\rho-\widetilde{\rho}\|~ \|g_{\dl,k}-P_M(g_{\dl,k})\| \\ \le  2B \sqrt{(\|g_{\dl, k}\|^2 - \signal{q}_k^T \signal{G}^{\dagger} \signal{q}_k)}.
		\end{multline*} 
		
		\item[(ii)] Using $\widetilde{\rho}=P_V(0)$ as the estimate of the angular power spectrum $\rho$, we have $(\forall k\in\mathcal{I})$
		\begin{multline}
		\label{eq.bound_cor}
		e_k \stackrel{(a)}{\le} \|\rho-P_M(\rho)\| ~ \|g_{\dl,k}-P_M(g_{\dl,k})\|
		\\ \stackrel{(b)}{\le} \|\rho\|~\|g_{\dl,k}-P_M(g_{\dl,k})\|\\ \stackrel{(c)}{\le} B \sqrt{(\|g_{\dl, k}\|^2 - \signal{q}_k^T \signal{G}^{\dagger} \signal{q}_k)}.
		\end{multline} 
	\end{itemize}
\end{Cor}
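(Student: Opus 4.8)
The plan is to deduce Corollary~\ref{cor.general_bound} directly from Proposition~\ref{prop.general_bound} by making the substitution $y = g_{\dl,k}$ and then rewriting the two abstract factors in terms of the concrete matrices $\signal{G}$ and $\signal{Q}$. The key observation is that $\rho$ and $\widetilde{\rho}$ both live in $V = \rho + M^\perp$, so each estimation error $e_k = |\widetilde{r}_{\dl,k} - r_{\dl,k}| = |\innerprod{\widetilde{\rho}-\rho}{g_{\dl,k}}|$ is exactly the quantity bounded in \refeq{eq.decoupled_bound} with $y$ chosen to be the downlink function $g_{\dl,k}$. This turns the corollary into a bookkeeping exercise rather than a new argument.

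For part (i), I would first invoke \refeq{eq.decoupled_bound} with $y = g_{\dl,k}$ to obtain $e_k \le \|\widetilde{\rho}-\rho\|\,\|g_{\dl,k}-P_M(g_{\dl,k})\|$, which is the first inequality. To get the second inequality, I would bound the algorithm-dependent factor using the triangle inequality together with the two norm constraints: since $\|\rho\|\le B$ and the algorithm guarantees $\|\widetilde{\rho}\|\le B$, we have $\|\widetilde{\rho}-\rho\|\le \|\widetilde{\rho}\|+\|\rho\|\le 2B$. The algorithm-independent factor is then rewritten via Proposition~\ref{proposition.error_qdk}: setting $y=g_{\dl,k}$ there gives $\|g_{\dl,k}-P_M(g_{\dl,k})\| = \sqrt{\|g_{\dl,k}\|^2 - \signal{z}^T\signal{G}^\dagger\signal{z}}$ with $\signal{z} = [\innerprod{x_1}{g_{\dl,k}},\ldots,\innerprod{x_L}{g_{\dl,k}}]^T$, and this $\signal{z}$ is precisely the $k$th column $\signal{q}_k$ of $\signal{Q}$ by the definition in \refeq{eq.qmatrix}. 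Combining these yields the stated $2B\sqrt{\|g_{\dl,k}\|^2 - \signal{q}_k^T\signal{G}^\dagger\signal{q}_k}$.

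For part (ii), inequality $(a)$ is just Proposition~\ref{prop.general_bound}(ii) applied with $y=g_{\dl,k}$, since choosing $\widetilde{\rho}=P_V(0)$ replaces the algorithm-dependent factor by $\|\rho-P_M(\rho)\|$. Inequality $(b)$ follows from $\|\rho-P_M(\rho)\| = \|P_{M^\perp}(\rho)\|\le\|\rho\|$, which is the norm-nonexpansiveness of the projection $P_{M^\perp}$ from Fact~\ref{fact.basic}(i) (the orthogonal decomposition $\rho=P_M(\rho)+P_{M^\perp}(\rho)$ makes $\|\rho\|^2 = \|P_M(\rho)\|^2 + \|P_{M^\perp}(\rho)\|^2 \ge \|P_{M^\perp}(\rho)\|^2$). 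Inequality $(c)$ then uses the hypothesis $\|\rho\|\le B$ on the first factor and again Proposition~\ref{proposition.error_qdk} with $\signal{z}=\signal{q}_k$ on the second factor, exactly as in part (i).

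There is essentially no hard analytic obstacle here; the mathematical content is entirely contained in Proposition~\ref{prop.general_bound} and Proposition~\ref{proposition.error_qdk}. The only point requiring care is the identification of $\signal{z}$ with the $k$th column $\signal{q}_k$ of $\signal{Q}$: one must check that the ordering and indexing of the vectors $x_1,\ldots,x_L$ used to build $\signal{G}$ in \refeq{eq.matrixg} is consistent with the rows of $\signal{Q}$ in \refeq{eq.qmatrix}, so that $\signal{q}_k = [\innerprod{x_1}{g_{\dl,k}},\ldots,\innerprod{x_L}{g_{\dl,k}}]^T$ indeed matches the $\signal{z}$ vector appearing in Proposition~\ref{proposition.error_qdk}. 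Once that notational alignment is confirmed, the proof is a direct chaining of the already-established inequalities.
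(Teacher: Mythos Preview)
Your proposal is correct and follows essentially the same approach as the paper's own proof: invoke Proposition~\ref{prop.general_bound}(i)--(ii) with $y=g_{\dl,k}$, bound $\|\widetilde{\rho}-\rho\|$ via the triangle inequality (part~(i)) or $\|\rho-P_M(\rho)\|=\|P_{M^\perp}(\rho)\|\le\|\rho\|$ (part~(ii)), and identify the remaining factor through Proposition~\ref{proposition.error_qdk} with $\signal{z}=\signal{q}_k$. Your explicit check that the $\signal{z}$ in Proposition~\ref{proposition.error_qdk} coincides with the $k$th column of $\signal{Q}$ is a nice bit of care that the paper leaves implicit.
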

\begin{proof} 
	
The proof of (i) is immediate from Proposition~\ref{prop.general_bound}(i), Proposition~\ref{proposition.error_qdk}, and the triangle inequality. To prove (ii), we note that the inequality in $(a)$ follows from Proposition~\ref{prop.general_bound}(ii), the inequality in $(b)$ follows from $\|\rho-P_M(\rho)\|=\|P_{M^\perp}(\rho)\|\le \|P_{M^\perp}\|_\mathrm{o}~\|\rho\| \le  \|\rho\|$ [see Fact~\ref{fact.basic}(i)], and the inequality in $(c)$ follows from  Proposition~\ref{proposition.error_qdk} and the assumption $\|\rho\|\le B$.
\end{proof}

\subsection{Improving the performance of the conversion with information about the support of the angular power spectrum}
\label{sect.side_info}

One of the practical implications of Corollary~\ref{cor.general_bound} is that, for a given  $k\in\mathcal{I}$, any algorithm producing an estimate $\widetilde{\rho}\in V$ of $\rho\in V$ is able to approximate reliably the components $r_{\dl, k}$ of the downlink covariance matrix $\signal{R}_{\dl, k}$ provided that the term $\|g_{\mathrm{d},k}-P_M(g_{\dl,k})\|$ is sufficiently small, regardless of how challenging the scenario for the estimation of $\rho$ may be. By recalling that the projection $P_M(g_{\dl,k})\in\mathcal{H}$ can be interpreted as the best approximation of $g_{\dl,k}$ in the closed subspace $M$, adding to the subspace $M$ functions as similar as possible to $g_{\dl,k}$  is a natural idea to decrease the estimation error bound $2B\|g_{\mathrm{d},k}-P_M(g_{\dl,k})\|$ of $r_{\dl, k}$. In the discussion below, we show a simple technique to design $M$ based on this simple principle.

The subspace $M$ is by definition the span of $S=S^\prime\cup S^\dprime$, where $S^\prime=\{g_{\ul,1},\ldots,g_{\ul,|\mathcal{I}|}\}$ is the set of uplink functions and $S^\dprime=\{v_1,\ldots,v_Q\}$ is the set of functions resulting from any prior knowledge about $\rho$ (see \refeq{eq.Vdprime}). Therefore, a simple means of including in the subspace $M$ functions that are close to each of the downlink functions $(g_{\dl,k})_{k\in\mathcal{I}}$ is to make the uplink functions $(g_{\ul,k})_{k\in\mathcal{I}}$ as similar as possible to the downlink functions $(g_{\dl,k})_{k\in\mathcal{I}}$. Alternatively, we can also include in the set $S^\dprime$ functions that are as similar as possible to the downlink functions $(g_{\dl,k})_{k\in\mathcal{I}}$ (NOTE: including $(g_{\dl,k})_{k\in\mathcal{I}}$ directly while guaranteeing $\rho\in V$ is difficult).  The first approach, which corresponds to the design of uplink and downlink functions, may not be always possible because it typically entails changes in hardware (e.g., changes in the inter-antenna spacing) or other modifications in standardized system parameters (e.g., operating frequencies). Therefore, here we focus on the second approach; namely, the construction of an appropriate set $S^\dprime$, or, equivalently, the corresponding linear variety or subspace $V^\dprime$ in \refeq{eq.Vdprime}. To derive the sets, we further assume the following:

\begin{itemize}
	\item[(A1)] The angular power spectrum $\rho\in\mathcal{H}$, the downlink functions $(g_{\dl,k})_{k\in \mathcal{I}}\subset\mathcal{H}$, and the uplink functions $(g_{\ul,k})_{k\in \mathcal{I}}\subset\mathcal{H}$ are functions in a Hilbert space of functions in $L^2(\Omega)$, or, as in \cite{miretti18SPAWC}, a Hilbert space $\mathcal{H}$ of tuples in $L^2(\Omega)\times L^2(\Omega)$ (NOTE: extensions to different Hilbert spaces is straightforward),~\footnote{In these Hilbert spaces, which are used in Sect.~\ref{sect.ULA}, we typically work with classes of equivalent functions, with the equivalence relation between two functions $f$ and $g$ defined by $f\sim g\Leftrightarrow \|f-g\|=0$. Equalities such as $f=g$ should be understood as equalities between the classes, not to the particular functions (in a pointwise sense) because $f$ and $g$ can differ, for example, in a countable set in their domains.}  where $\Omega\subset\real^K$. \\
	\item[(A2)] There exists a known non-null measurable set $C_\mathrm{S}\subset \Omega$ such that $\mathrm{Supp}(\rho)\subset C_\mathrm{S}\neq \emptyset$ for the angular power spectrum functions $\rho\in\mathcal{H}$ that can be observed in the system. Intuitively, the set $C_\mathrm{S}$ is a superset of $\mathrm{Supp}(\rho)$ for which $\theta\notin C_\mathrm{S}$ implies $\rho(\theta)=0$.
\end{itemize}
Assumption A1 is very natural. It is satisfied in many realistic models representing the angular power spectrum in practical systems. In these models, $\Omega$ is the set of possible azimuth and elevation angles \cite{miretti18,miretti18SPAWC}. Assumption A2 is system dependent, but it may be valid in scenarios where signals of users impinging on the antenna array are not likely to have any significant power at certain angles, which are used for the construction of $C_\mathrm{S}$. 

We now proceed to show how support information of $\rho$ can be used to design the subspace $M$ by using arguments that have a strong theoretical justification. To this end, consider the closed subspace
 \begin{align*}
\mathcal{K}:=\overline{{\{x\in\mathcal{H}~|~(\forall\theta\in  C_\mathrm{S})~ x(\theta)=0\}}}.
\end{align*}
 The projection $P_\mathcal{K}:\mathcal{H}\to\mathcal{K}$ from $v\in\mathcal{H}$ onto $\mathcal{K}$ is the function given by (we omit the proof for brevity):
\begin{align*}
\mathcal{H}\ni P_{\mathcal{K}}(v):\Omega\to\real:\theta\mapsto\begin{cases}
0,&\text{if }\theta\in C_{\mathrm{S}}, \\
 v(\theta) & \text{otherwise}.
\end{cases}
\end{align*}
Since $P_{\mathcal{K}}(\rho) = 0$ from the assumption $\mathrm{Supp}(\rho) \subset C_\mathrm{S}$ and the definition of the subspace $\mathcal{K}$, we have $\rho\in \mathcal{K}^\perp$, and thus
\begin{align} 
\label{eq.support_info}
(\forall v\in\mathcal{H})\innerprod{P_\mathcal{K}(v)}{\rho}=0.
\end{align}
  In particular, using the downlink functions as the function $v$ in \refeq{eq.support_info} yields
\begin{align}
\label{eq.projgk}
(\forall k\in\mathcal{I})\innerprod{P_\mathcal{K}(g_{\mathrm{d},k})}{\rho}=0.
\end{align}
We have now reached the point to show the closed subspace $V^\dprime$ we propose to represent the prior knowledge about the support of $\rho$. More precisely, in light of \refeq{eq.projgk}, we use 
\begin{align*}
V^\dprime=\cap_{k\in\mathcal{I}}\{x\in\mathcal{H}~|~\innerprod{x}{v_k}=0\}\ni \rho,
\end{align*}
 where $v_k:=P_\mathcal{K}(g_{\mathrm{d},k})$ for each $k\in \mathcal{I}$. This choice is intuitively appealing because we add to the set $S$ in \refeq{eq.setS} all vectors $(v_k)_{k\in{\mathcal{I}}}$ in $\mathcal{K}$ that best approximate (with respect to the metric $d(x,y)=\|x-y\|$) the downlink functions $(g_{\dl,k})_{k\in{\mathcal{I}}}$, and we recall from the above discussion that, for each $k\in \mathcal{I}$, the estimation error $|\widetilde{r}_{\dl,k}-r_{\dl,k}|$ decreases as the ability to represent $g_{\dl,k}$ with functions in $M=\mathrm{span}(S)$ improves. Note that we could further improve the reliability of the conversion by repeating the above procedure to include in $S$ additional functions of the form $P_\mathcal{K}(v)$ with $v\in\mathcal{H}$ [e.g., the functions $(P_\mathcal{K}(g_{\ul,k}))_{k\in\mathcal{I}}$]. Alternatively, we could also change the definition of inner products to consider only functions in $L^2(C_\mathrm{S})$. These approaches can be numerically unstable in large antenna arrays if the information about the support of $\rho$ is erroneous and appropriate mitigation techniques are not applied, but we leave this discussion to a future study because of the space limitation.

All the above improvements are available for the simple approach using $\widetilde{\rho}=P_V(0)$ as the estimate of $\rho$. This approach is particularly interesting because, as shown in the study in \cite{miretti18}, which has not considered the enhancements discussed above, the whole process of estimating the angular power spectrum and using this estimate to reconstruct the downlink covariance matrix can be done with a simple matrix-vector multiplication. This important feature is not lost with the enhancements proposed in this subsection. More precisely, denote by $\widetilde{\signal{R}}_\dl$ the estimate of the downlink covariance matrix $\signal{R}_\dl$, and recall that $\widetilde{r}_{\dl,1} = \innerprod{P_V(0)}{g_{\dl,1}},\ldots,\widetilde{r}_{\dl,|\mathcal{I}|}=\innerprod{P_V(0)}{g_{\dl,|\mathcal{I}|}}$ represent the real and imaginary parts of the components of $\widetilde{\signal{R}}_\dl$. Using \refeq{eq.pv0} in these inner products, we verify that uplink-downlink channel covariance conversion can be performed with the following simple linear operation:

\begin{align}
\label{eq.algorithm1}
\mathrm{vec}[\mathrm{Real}(\widetilde{\signal{R}}_\dl)~\mathrm{Imag}(\widetilde{\signal{R}}_\dl)] = \signal{Q}^T\signal{G}^{\dagger}\underline{\signal{r}}=\signal{A}{\signal{r}},
\end{align}
where $\underline{\signal{r}} = [{\signal{r}}^T, 0,\ldots, 0]^T\in\real^{L}$, ${\signal{r}}:=[r_{\ul,1},\ldots,r_{\ul,|\mathcal{I}|}]^T\in\real^{|\mathcal{I}|}$, and $\signal{A}\in\real^{|\mathcal{I}|\times |\mathcal{I}|}$ is the matrix obtained by keeping only the first $|\mathcal{I}|$ columns of the matrix $\signal{Q}^T\signal{G}^{\dagger}$. Note that the matrix $\signal{A}$ only depends on the support information, which is often assumed to be slowly time-varying information, and the array response, so this matrix is computed sporadically. If no support information is used, then $\signal{A}$ needs to be computed only once. 

Before we finish this section, it is also worth noticing that, by increasing the subspace $M$ with support information about $\rho$ as described above, we also decrease the algorithm error term $\|\rho-P_M(\rho)\|$ in the bound (a) in \refeq{eq.bound_cor}, thus further improving the reliability of the conversion.

 \section{Example: Uniform linear arrays}
 \label{sect.ULA}
 
We now further specialize the results in the previous section to uniform linear arrays. This particular choice enables us to relate the analysis in the previous sections to existing results in the literature that, unlike our approaches, do not seem easy to extend to schemes exploiting information about the structure of the angular power spectrum or to systems where the functions $(g_{\ul,k})_{k\in\mathcal{I}}$ and $(g_{\dl,k})_{k\in\mathcal{I}}$ are determined by measurements instead of models.\footnote{If we normalize the angular power spectrum to satisfy $\|\rho\|=1$, we can simply set the constant $B$ in Corollary~\ref{cor.general_bound} to $B=1$. In contrast, in addition to a similar  normalization, the bound in \cite[Theorem~1]{hag2018multi} requires knowledge of a constant that does not seem easy to determine. Without this constant, comparing directly the proposed bounds with the bound in \cite[Theorem~1]{hag2018multi} seems difficult.}
\vspace{-.2cm}
\subsection{System model and bounds without support information} 
 	In a uniform linear array with $N$ antennas, under very mild assumptions \cite{haghighatshoar2017massive,xie2016overview}, the uplink and downlink channel covariance matrices for typical frequency gaps are given by $\signal{R}_\ul:=\signal{R}(f_\ul)\in\mathbb{C}^{N\times N}$ and $\signal{R}_\dl:=\signal{R}(f_\dl)\in\mathbb{C}^{N\times N}$, where $f_\ul\in\real_+$ and $f_\dl\in\real_+$ are, respectively, the uplink and downlink frequencies; 
 	\begin{align*}
 	\signal{R}(f) = \int_{-\pi/2}^{\pi/2} \rho(\theta) \signal{a}(\theta, f)\signal{a}(\theta, f)^H\mathrm{d}\theta
 	\end{align*}
 	(the integral should be understood coordinate-wise) is the channel covariance matrix for a given frequency $f$;	${\rho:[-\pi/2, \pi/2]\to\real_+}$ is the angular power spectrum; 
 	\begin{equation}
 	\label{eq.integral}
 	\resizebox{\hsize}{!}{$
 	\begin{array}{rl}
 	\signal{a}:[-\frac{\pi}{2}, \frac{\pi}{2}]\times\real_+\to&\mathbb{C}^N \\
 	(\theta, f)\mapsto&\left[1,e^{i2\pi \frac{f}{c}d\sin\theta},\ldots,e^{i2\pi \frac{f}{c}d (N-1)\sin\theta}\right]
 	\end{array}$}
 	\end{equation}
 	is the array response for a given angle $\theta$ and frequency $f$; $c$ is the speed of the wave propagation; and $d$ is the inter-antenna spacing. 
 	
 	In real physical systems, we can safely assume that $\rho$ is an element of the Hilbert space $(\mathcal{H},\innerprod{\cdot}{\cdot})$ of Lebesgue (real) square-integrable functions $\mathcal{H}=L^2([-\pi/2,\pi/2])$ equipped with the inner product $(\forall \rho \in\mathcal{H})(\forall g\in\mathcal{H})\innerprod{\rho}{g}=\int_{-\pi/2}^{\pi/2}\rho(\theta)g(\theta)\mathrm{d}\theta$. As a result, by fixing $f_\ul$, in light of \refeq{eq.integral} the functions $(g_{\ul,k})_{k\in \{1,\ldots,2N^2\}}$ in \refeq{eq.upr} are obtained from the equality $(\forall\theta\in[-\pi/2,\pi/2])$
 	\begin{multline}
 	\label{eq.ordering}
 	 [g_{\ul,1}(\theta),\ldots,g_{\ul,2N^2}(\theta)]^T = \\ \mathrm{vec}\left(\left[\begin{matrix}\mathrm{Real}(\signal{a}(\theta, f_\ul)\signal{a}(\theta, f_\ul)^H)\\ \mathrm{Imag}(\signal{a}(\theta, f_\ul)\signal{a}(\theta, f_\ul)^H)\end{matrix}\right]\right).
 	\end{multline}
 	The downlink functions $(g_{\dl,k})_{k\in\{1,\ldots,2N^2\}}$ in \refeq{eq.dlr} are obtained analogously by considering the downlink frequency $f_\dl$ in \refeq{eq.ordering}.
 	
 	 In uniform linear arrays, the covariance matrices are Hermitian and Toeplitz \cite{haghighatshoar2017massive,miretti18,hag2018multi}, so, with the ordering in \refeq{eq.ordering}, we can consider only the functions $g_{\ul,1},\ldots,g_{\ul,2N}$ responsible for the first column of $\signal{R}_\ul$ because knowledge of this column is enough to reconstruct all elements of $\signal{R}_\ul$. For the same reason, we use only the downlink functions $g_{\dl,1},\ldots,g_{\dl,2N}$. By doing so, the set $S^\prime$ in \refeq{eq.setSp} is given by $S^\prime=\{g_{\ul,1},\ldots, g_{\ul,2N}\}\subset\mathcal{H}$, and we can redefine the index set $\mathcal{I}$ accordingly; i.e., $\mathcal{I}:=\{1,\ldots, 2N\}$. 
 	  
 	 Without any information about the support of $\rho$, we have $S=S^\prime$, and we can use the results in \cite[Sect.~4.1]{miretti18} to compute the algorithm independent term  $\|g_{\dl,k}-P_M(g_{\dl,k})\|$ ($k\in\mathcal{I}$) of the bounds in Corollary~\ref{cor.general_bound} by using Bessel functions of the first kind, order zero, which we denote by $J_0:\real\to\real$. In this case, the bound in the last inequality in Corollary~\ref{cor.general_bound}(ii) reduces to
 	\begin{align}
 	 	 \label{bound.specific}
 (\forall k\in \mathcal{I})~ 	 e_k\le B
\sqrt{(\|g_{\dl,k}\|^2 - \signal{q}_k^T \signal{G}^{\dagger} \signal{q}_k)},
 	\end{align}
 	where 
 	\begin{multline*}
	 	\|g_{\dl,k}\|^2=\\ \begin{cases}
	 	\dfrac{\pi}{2} \left(1+J_0\left(4\pi \dfrac{f_\dl}{c} d (k-1) \right)\right)&\text{ if } 1\le k \le N \\
		\dfrac{\pi}{2} \left(1-J_0\left(4\pi \dfrac{f_\dl}{c} d (k-N-1) \right)\right)&\text{ otherwise, } 
	 	\end{cases}
	\end{multline*}
\begin{align*}
\signal{G}=\dfrac{\pi}{2}\left[\begin{matrix}
\signal{G}_\mathrm{r}& \signal{0} \\ 
\signal{0}& \signal{G}_\mathrm{j}
\end{matrix}\right],  \signal{Q}=[\signal{q}_1,\ldots,\signal{q}_{2N}]=\dfrac{\pi}{2}\left[\begin{matrix}
\signal{Q}_\mathrm{r}& \signal{0} \\ 
\signal{0}& \signal{Q}_\mathrm{j},
\end{matrix}\right]
\end{align*}
and the components of the $n$th row and $m$th column of the matrices $\signal{G}_\mathrm{r},\signal{G}_\mathrm{j},\signal{Q}_\mathrm{r},\signal{Q}_\mathrm{j}\in\real^{N\times N}$ are given by $\signal{G}_{\mathrm{r},nm}=J_0(x_{nm})+J_0(y_{nm})$, $\signal{G}_{\mathrm{j},nm}=J_0(x_{nm})-J_0(y_{nm})$, $\signal{Q}_{\mathrm{r},nm}=J_0(p_{nm})+J_0(q_{nm})$,
$\signal{Q}_{\mathrm{j},nm}=J_0(p_{nm})-J_0(q_{nm})$ with $$x_{nm}=2\pi~d~\dfrac{f_\ul}{c}(n-m),~y_{nm}=2\pi~d~\dfrac{f_\ul}{c}(n+m-2),$$ $$p_{nm}=2\pi d\left(\dfrac{f_\ul(n-1)}{c}-\dfrac{f_\dl(m-1)}{c}\right),$$
and $$q_{nm}=2\pi d\left(\dfrac{f_\ul(n-1)}{c}+\dfrac{f_\dl(m-1)}{c}\right).$$

\subsection{Numerical experiments}
For a concrete example of the bounds, we use an antenna array with the configuration in Table~\ref{table.parameters}. The number of antennas in the array is relatively small to emphasize the fact that the proposed bounds do not appeal to asymptotic results. As discussed in \cite{hag2018multi}, the configuration in Table~\ref{table.parameters} is challenging for uplink-downlink conversion for two main reasons: (i) the uplink frequency is lower than the downlink frequency, and (ii) the antenna spacing is larger than half of the wavelength $c/(2f_\dl)$ of the higher frequency $f_\dl$, so we have the undesirable phenomenon known as grating lobes \cite{van2004optimum,hag2018multi}. We show below that this challenging scenario for uplink-downlink conversion can be formally verified with the simple bounds in \refeq{bound.specific}, and the problems for uplink-downlink conversion can be mitigated with information about the support of the angular power spectrum.

\begin{table}
	\caption{Parameters of the uniform linear array}
	\label{table.parameters}
	\begin{center}
		\begin{tabular}{cc}
			\hline \\
			Number of antennas $(N)$ & 30 \\
			Uplink frequency $(f_\ul)$ & 1.8~MHz \\
			Downlink frequency $(f_\dl)$ & 1.9~MHz \\
			Speed of wave propagation $(c)$ & $3\cdot 10^8$~m/s \\
			Antenna spacing $(d)$ & $1.05~\dfrac{c}{2f_\ul}$ \\
			\hline
		\end{tabular}
	\end{center}
	\vspace{-.5cm}

\end{table}

To illustrate the theoretical gains that can be achieved with the technique discussed in Sect.~\ref{sect.side_info}, we assume that $\mathrm{Supp}(\rho)\subset C_\mathrm{S}=[0,~\pi/2]$, and $C_\mathrm{S}$ is known. For all simulations in this section, we use only the scheme in Corollary~\ref{cor.general_bound}(ii) for the estimation of $\rho$ because of its low computational complexity, as discussed in Sect.~\ref{sect.side_info}. 

In Fig.~\ref{fig.bounds_theory}, assuming $\|\rho\|=B=1$, we show the bounds in the last inequality in \refeq{eq.bound_cor} with and without support information (SI). For the computation of the former bound, we use the expressions in \refeq{bound.specific}. For the latter bound, we construct the matrices $\signal{G}$ and $\signal{Q}$ in \refeq{eq.matrixg} and \refeq{eq.qmatrix} by computing integrals numerically, unless the integral falls into one of the cases computed  in \refeq{bound.specific}. From Fig.~\ref{fig.bounds_theory}, it is clear that, without any support information, the estimate $\widetilde{r}_{\dl,k}$  can be unreliable for many indices $k\in\mathcal{I}$, which is also in accordance with the results in \cite{hag2018multi}. In contrast, with support information, all estimates $(\widetilde{r}_{\dl,k})_{k\in\mathcal{I}}$ of the components of the downlink covariance matrix are reliable, even if the estimate $\widetilde{\rho}=P_V(0)$ of the angular power spectrum $\rho$ is not necessarily accurate. 

\begin{figure}
	\begin{center}
		\includegraphics[width=\columnwidth]{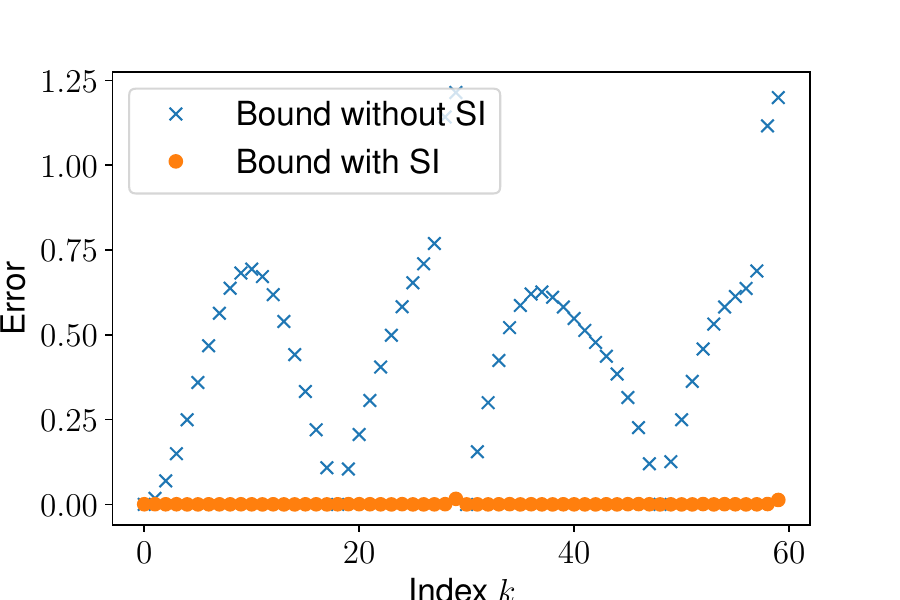}
		\caption{Upper bound \refeq{eq.bound_cor} on the error $(e_k)_{k\in\mathcal{I}}$ for the conversion performed with the algorithm in \refeq{eq.algorithm1} with and without support information. $B=1$.}
		\label{fig.bounds_theory}
	\end{center}
\end{figure}

To illustrate the above fact, consider the following example for $\rho:[-\pi/2,\pi/2]\to\real_+$:

\begin{align} 
\label{eq.aps}
\rho(\theta)= n{e}^{-{|\theta-.5|}/{.05}} + 4n~{e}^{-{|\theta-1.4|}/{.05}},
\end{align}
where $n\in\real_+$ is a normalizing constant chosen to guarantee that $\|\rho\|=1$. This exemplary $\rho$ can be interpreted as coming from a user with two multipath components at angles 0.5~rad and 1.4~rad. Note that we have violated the assumption of the support of $\rho$, but the signal energy outside $C_\mathrm{S}$ is small compared to the energy in $C_\mathrm{S}$, so we can expect the bounds shown in Fig.~\ref{fig.bounds_theory} to be accurate. This fact is illustrated in Fig.~\ref{fig.bounds_practice}, which shows the absolute error $(e_k=|r_{\dl,k}-\widetilde{r}_{\dl,k}|)_{k\in\mathcal{I}}$ [with $\tilde{r}_{\ul,k}$ computed by using \refeq{eq.algorithm1}] of the estimates. Note that, by including support information, uplink-downlink conversion has been performed reliably for all components of the downlink covariance matrix, even though the estimate of angular power spectrum (APS) is not necessarily accurate, as depicted in Fig.~\ref{fig.aps}. We verify, for example, that the schemes used to estimate the function in \refeq{eq.aps} produce functions taking negative values for some angles.

\begin{figure}
	\begin{center}
		\includegraphics[width=\columnwidth]{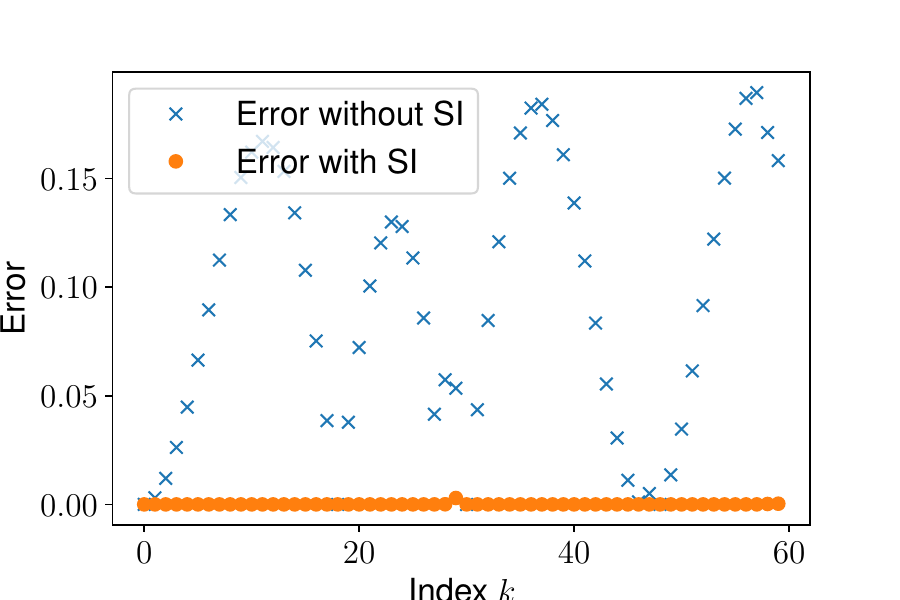}
		\caption{Simulated conversion error $(e_k)_{k\in\mathcal{I}}$ of the algorithm in \refeq{eq.algorithm1} with and without support information. Angular power spectrum in \refeq{eq.aps}. }
		\label{fig.bounds_practice}

	\end{center}
\end{figure}

\begin{figure}
	\begin{center}
		\includegraphics[width=\columnwidth]{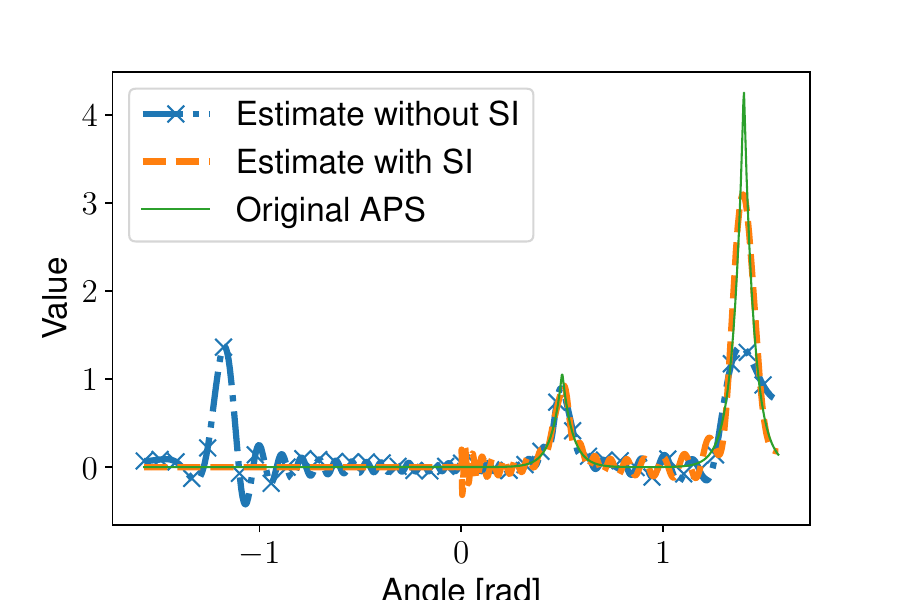}
		\caption{Estimate $\widetilde{\rho}=P_V(0)$ [see \refeq{eq.pv0}] of the angular power spectrum (APS) $\rho$ in \refeq{eq.aps} with and without support information for the construction of the linear variety $V$.} 
		\label{fig.aps}
	\end{center}
\end{figure}

\section{Summary and Conclusions}
Recent work has proved that, without side information about the angular power spectrum, existing algorithms in the literature may not be able to estimate reliably all components of the downlink covariance matrix. In this study we have introduced alternative reliability bounds that are based on elementary arguments in infinite dimensional Hilbert spaces. The main advantages of the proposed analysis are the simplicity and the generality. Unlike previous results, the bounds shown here can be straightforwardly used to analyze the performance of algorithms that exploit information about the support of the angular power spectrum in challenging scenarios that take into account the polarization of antennas and physical impairments of real antenna arrays. To illustrate a possible application of the bounds, we have improved a simple set-theoretic algorithm that does not require any parameter tuning. We have shown that, with coarse information about the angular power spectrum, all components of the downlink covariance matrix can be reliably estimated from the uplink covariance matrix with a simple linear operation. This result suggests that, in some scenarios, the main challenge may be the estimation of the uplink covariance matrix, not necessarily the uplink-downlink conversion problem. 

{\footnotesize{\bf Acknowledgment:} The work was supported by the German Federal Ministry of Education and Research under grant 16KIS0605.}


\bibliographystyle{IEEEtran}
\bibliography{IEEEabrv,references}

\end{document}